\newcommand{\G}{\mathcal{G}}
\newcommand{\Pred}{\mathcal{P}}
\newcommand{\PredAdd}{\mathcal{P}^{+}}
\newcommand{\FTS}{\ensuremath{\G^*_{st}}\xspace}
\newcommand{\FT}{\ensuremath{\G^*}\xspace}
\newcommand{\leadstoo}{\overset{st}{\leadsto}}
\newcommand{\TVG}{\ensuremath{{\G=(V,E,\mathcal{T},\rho,\zeta)}}\xspace}
\newcommand{\fixme}[1]{{\color{red}{(fixme)}}\xspace}
\title{\hspace{-.5cm}Testing Temporal Connectivity in Sparse Dynamic Graphs\thanks{A short version appeared in French in ALGOTEL'14 \cite{BCCJN14}. This work is partially supported by the DGA through a PhD scholarship (No 2013 60 0074).}\hspace*{-1cm}}
\author{\large Matthieu Barjon,
  Arnaud Casteigts,
  Serge Chaumette,\\
  Colette Johnen,
  Yessin M. Neggaz
}
  \institute{LaBRI, University of Bordeaux}
\begin{document}

\maketitle

\begin{abstract}

We address the problem of testing whether a given dynamic graph is temporally connected, {\it i.e.} a temporal path (also called a {\em journey}) exists between all pairs of vertices. We consider a discrete version of the problem, where the topology is given as an evolving graph $\G=\{G_1,G_2,...,G_{k}\}$ whose set of vertices is invariant and the set of (directed) edges varies over time. Two cases are studied, depending on whether a single edge or an unlimited number of edges can be crossed in a same $G_i$ (strict journeys {\it vs} non-strict journeys). 

In the case of {\em strict} journeys, a number of existing algorithms designed for more general problems can be adapted. We adapt one of them to the above formulation of the problem and characterize its running time complexity. The parameters of interest are the length of the graph sequence $k=|\G|$, the maximum {\em instant} density $\mu=max(|E_i|)$, and the {\em cumulated} density $m=|\cup E_i|$. Our algorithm has a time complexity of $O(k\mu n)$, where $n$ is the number of nodes. This complexity is compared to that of the other solutions: one is always more costly (keep in mind that is solves a more general problem), the other one is more or less costly depending on the interplay between instant density and cumulated density. The length $k$ of the sequence also plays a role. We characterize the key values of $k, \mu$ and $m$ for which either algorithm should be used. Our solution is relevant for sparse mobility scenario (e.g. robots or UAVs exploring an area) where the number of neighbors at a given time is low, though many nodes can be seen over the whole execution.

In the case of {\em non-strict} journeys, for which no algorithm is known, we show that some pre-processing of the input graph allows us to re-use the same algorithm than before. By chance, these operations happens to cost again $O(k\mu n)$ time, which implies that the second problem is not more difficult than the first.

Both algorithms gradually build the transitive closure of strict journeys ($\G^*_{st}$) or non-strict journeys ($\G^*$) as the edges are examined; these are {\em streaming} algorithms. They stop their execution whenever temporal connectivity is satisfied (or after the whole graph has been examined). A by-product of the execution is to make $\G^*_{st}$ and $\G^*$ available for further connectivity queries (in a temporal version), these queries being then reduced to simple adjacency tests in a static graph.
\end{abstract}

\section{Introduction}

Connected and mobile devices such as mobile phones, satellites, cars, or robots form highly dynamic networks in which connectivity between nodes evolves rapidly and continuously. Furthermore, the topology of such a network at a given time is generally not connected, and even extremely sparse in the case of exploration or surveillance scenarios~\cite{BMF+00,FMS13}, or when passive mobility is considered with humans or animals~\cite{JSS05,SRJB03}. However, even in these extreme cases, a form of connectivity arises over time and space, by means of delay tolerant communications, where messages are retained until an opportunity of transmission appears (mechanisms of type "{\em store-carry-forward"}). This type of connectivity is referred to as {\em temporal connectivity}.

In this paper, the problem that we study is to automatically test whether a given dynamic graph is temporally connected or not. In other words, we want to decide if there is a temporal path ({\em journey}) between every pair of nodes in the network. A key concept is that of {\em transitive closure} of journeys, introduced in~\cite{BF03}. This is a static directed graph (even if the dynamic graph is not itself directed) whose edges represent the potential journeys. From this structure, the membership of a given dynamic graph to several classes of graphs can be decided~\cite{CCF09}, and in particular to the class of temporally connected graphs (complete transitive closure). We address the computation of transitive closure in the case of strict journeys (\FTS) or non-strict journeys (\FT) given a dynamic graph $\G=\{G_1,G_2,...,G_{k}\}$. 

In the case of strict journeys, several algorithms can be adapted to compute \FTS. Several algorithms are given in~\cite{BFJ03}, each computing optimal journeys according to a given criterion (foremost, shortest, fastest). Any of these algorithms can be adapted to compute \FTS\ by using the appropriate parameters. Precisely, these algorithms compute the journeys taking into account the duration of edges crossing (latency) and the duration of each graph $G_i$ (timed evolving graphs~\cite{Fer02}). If we assign to each $G_i$ a unit duration that also corresponds to the duration of edges crossing, then, for a given source node, the result is the set of strict journeys. This algorithm has to be executed $n$ times, once from each vertex, in order to compute the transitive closure of the journeys. The most efficient of the three algorithms (foremost journeys) has an execution time of $O(m\log k + n\log n)$, hence a total time of $O(n(m\log k + n\log n))$.

\label{sec:paris}

An algorithm computing a generalization of the transitive closure of journeys was proposed in~\cite{WDCG12}. This generalization, called {\em dynamic reachability graph}, corresponds to a transitive closure of journeys parametrized by a starting date, a maximal duration of the journeys, and a traversal time for edges. It applies to dynamic graphs represented as TVGs~\cite{CFQS12} ({\em time-varying graph}), namely a quintuplet $\TVG$ where $\mathcal{T}$ is the temporal domain ($\mathbb{R}^+$ in the case of~\cite{WDCG12}) and $\rho$ and $\zeta$ are functions that determine the presence and the latency of a given edge at a given instant, respectively. This algorithm  can be used to compute $\G^*_{st}$ as follows: First create a TVG whose edges presence dates (function $\rho$) are all multiples of some unit value that also corresponds to the latency given by function $\zeta$ (here, a constant). 
Finally, the constraint on journeys duration is set to $+\infty$; the departure date is set to $0$; then $\G^*_{st}$ is obtained by executing the algorithm from~\cite{WDCG12} on the created TVG. Informally, the strategy of that algorithm is to compose reachability graphs incrementally over increasing periods of time, namely each graph covering $2^{i}$ time steps is obtained by composition of two graphs covering $2^{i-1}$ time steps (for $i$ from $1$ to $\log k$). The complexity of this algorithm is $O(k \log k\ mn\log n)$. The authors do not exclude the possibility to get rid of the trailing $\log n$ factor, potentially linked to an implementation choice (see Section~4.3 of~\cite{WDCG12}). Either way, the complexity of that solution dominates that of our solution.

We propose a decicated approach for computing the transitive closure (strict at first) of an untimed directed evolving graph $\G=\{(V,E_i)\}$ which has a better time complexity than the adaptation of~\cite{WDCG12} in all cases, and than the adaptation of~\cite{BFJ03} for a range of dynamic graphs, in particular those whose density is low at any time, though arbitrarily dense over time. The algorithm consists of a temporal adaptation of the Bellman-Ford principle used in static graphs to compute distances between nodes. This principle was also adapted in~\cite{KKW08} to compute time lags between entities based on a contact history (e.g. a sequence of dated emails). Our adaptation is quite straight and its time complexity is $O(k \mu n)$ with the considered data structure (a mere sequence of sets of edges), where $k=|\G|$ is the length of the sequence (also called number of steps) and $\mu=max(|E_i|)$ is the maximal number of edges that exist at any given step. This last parameter is to be contrasted with $m=|\cup E_i|$, the total number of edges that exist over time. As discussed above, the distinction is relevant in a number of scenarios based on mobile communicating entities. Furthermore, this type of graphs typically corresponds to those in which the question of the temporal connectivity occurs, since it is not \emph{a priori} granted.
In the case of non-strict journeys, for which we do not know any existing algorithm, we show that the same solution can be directly adapted with the same time complexity: $O(k\mu n)$. This variant is based on a double transitive closure: a {\em static} transitive closure applied to each $G_i$ independently, and a temporal one (as in the case of strict journeys) applied to the sequence of those static closures.

Both algorithms gradually build the transitive closure as the edges are examined; these are {\em online} algorithms. They stop their execution as soon as the temporal connectivity is satisfied (or after the whole dynamic graph has been examined). A by-product of the execution is to make $\G^*_{st}$ and $\G^*$ available for further connectivity queries (in a temporal version), these queries being then reduced to simple adjacency tests in a static graph.

The rest of this paper is organized as follows. The key concepts and main notations are introduced in Section~\ref{sec:model}. Section~\ref{sec:strict} presents our solution to compute the transitive closure of strict journeys, which is then adapted to the case of non-strict journeys in Section~\ref{sec:non-strict}. Time complexity is analyzed throughout the paper. We provide in Section~\ref{sec:comparaison} a more detailed comparison that indicates the values of $k,\mu$ and  $m$ for which our solution performs better than the adaptation from~\cite{BFJ03}.

\section{Model and notations}
\label{sec:model}

Let $\G$ be an untimed directed evolving graph $\{G_i=(V,E_i)\}$. There is a {\em non-strict} journey from $u$ to $v$ in $\G$ if and only if there exists a sequence of edges $e_1, e_2, ..., e_p$ connecting $u$ to $v$ such that for all $j\in 1..p$$-$$1$, $e_{j}\in E_i \implies \exists i'\ge i, e_{j+1}\in E_{i'}$. If the inequality $i'>i$ is strict, the journey is called {\em strict journey}, i.e. at most one edge can be crossed in a single step $i$ (as opposed to an unlimited number for non-strict journeys). The existence of a non-strict (resp. strict) journey from $u$ to $v$, when the context is implicit, is noted $u\leadsto v$ (resp. $u\leadstoo v$). The distinction between a strict journey and a non-strict journey was introduced in~\cite{CCF09} to report on necessary or sufficient conditions on the dynamics of the graph, regarding distributed algorithms based on pairwise interactions. It should not be mistaken with the notion of direct or indirect journey~\cite{CFGSY13}, which corresponds to the very fact of allowing pauses in between consecutive hops.

The non-strict transitive closure of the dynamic graph $\G$ is the {\em static} directed graph $\FT=(V,E^*)$ such that $(u,v)\in E^* \Leftrightarrow u\leadsto v$. We define, in the same way, the strict transitive closure of $\G$ by the static directed graph $\FTS=(V,E_{st}^*)$ where $(u,v)\in E_{st}^* \Leftrightarrow u\leadstoo v$.
It should be noted that the transitive closure graph of $\G$ is directed whatever the nature (directed or not) of the edges in $\G$. This is due to the temporal dimension that, by nature, implies an orientation.

Given a dynamic graph $\G$, we note $k=|\G|$ the number of time steps in $\G$, i.e. the number of static graphs contained in $\G$. We distinguish two parameters to report on the number of edges in the graph: the maximal number of edges that exist at each single step, i.e. $\mu=max(|E_i|)$, and the total number of edges that exist over time, i.e. $m=|\cup E_i|$. Of course, whatever the considered graph, we have $m\ge \mu$. 
Moreover, as already discussed, it is not rare that a practical scenario verifies $\mu=o(n)$, or even $\mu=\Theta(1)$, while $m=\Theta(n\log n)$ or $m=\Theta(n^2)$.

\section{Computation of the transitive closure for strict journeys}
\label{sec:strict}

We propose below an algorithm for computing the strict transitive closure $\FTS$ in the general case that $\G$ is directed. The principle of the algorithm is to build, step by step, the list of all the predecessors of each vertex $v$, i.e., the set $\{u : u\leadstoo v\}$. Each step of the algorithm works on a static graph of $\G$. Let $\Pred(v,t)$ be the set of known predecessors of $v$ by the end of the  $t$ first steps of the algorithm (i.e. after taking into account edge sets: $E_1, ..., E_t$). The core of step $i$ is to add $\Pred(u,i-1)$ to $\Pred(v,i)$ for each edge $(u,v)\in E_i$. In practice, only two variables $\Pred(v)$ and $\PredAdd(v)$ are maintained on each node $v$. $\PredAdd(v)$ contains the new predecessors of $v$ (computed during the current step). At the end of the current step $\PredAdd(v)$ is merged to $\Pred(v)$, the set of all predecessors of $v$. 
The detailed operations are given in Algorithm~\ref{algo:FT-strict}.

\begin{algorithm}[h]
  \SetKwData{Old}{predecessors}\SetKwData{Working}{$\Pred$}\SetKwData{Successors}{successors}
  \SetKwFunction{Add}{Add}\SetKwFunction{Copy}{copy}\SetKwFunction{AddAll}{AddAll}
  \SetKwInOut{Input}{Input}\SetKwInOut{Output}{Output}
  \Input{A dynamic graph $\G$ given as $(V,\{E_i\})$}
  \Output{A set of edges $E^*$ such that $\FTS=(V,E^*)$}
  \BlankLine
  \tcp{Initialization}
  \ForEach{$v$ in $V$}{
    $\Pred(v) \gets \{v\}$\tcp*{Each node is its own predecessor}
    $\PredAdd(v) \gets \emptyset$\;
  }

  \ForEach{$E_i$ in $\{E_i\}$}{
    $UpdateV \gets \emptyset$ \tcp*{List of nodes whose predecessors will be updated}
    \tcp{List predecessors induced by the edges in $E_i$}
    \ForEach{$(u,v)$ in  $E_{i}$}{ 
      $\PredAdd(v) \gets \PredAdd(v) \cup \Pred(u)$\;
      $UpdateV \gets UpdateV \cup \{v\}$
    }
    
    \tcp{Add found predecessors to known predecessors}
    \ForEach{$v$ in $UpdateV$}{
      $\Pred(v) \gets \Pred(v) \cup \PredAdd(v)$\;
      $\PredAdd(v) \gets \emptyset$\;
    }

    \tcp{Test whether transitive closure is complete; if so, terminates}
    $isComplete \gets true$\;
    \ForEach{$v$ in $V$}{
      \If{$|\Pred(v)| < |V|$}{
        $isComplete \gets false$\;
        $break$\;
      }
    }
    \If{$isComplete$}{ \tcp{The algorithm terminates returning a complete graph (edges)}
    \KwRet{$V\times V\setminus \{loops\}$}
   } 
  }

  \tcp{Build transitive closure based on predecessors}
  $E^* \gets \emptyset$

  \ForEach{$v$ in $V$}{
    \ForEach{$u$ in $\Pred(v)\setminus \{v\}$}{
      $E^* \gets E^* \cup (u,v)$
    }
  }
  \KwRet{$E^*$}

  \caption{\label{algo:FT-strict}Computation of the strict transitive closure $\FTS$}
\end{algorithm}

\subsection{Time complexity}

This section provides an analysis of the time complexity of the algorithm. In this analysis, we consider the use of {\em set} data structures, for which the union has at worst a linear cost in the number of items of the two considered sets. We also assume that the size of a set can be known in constant time, which is the case with most of the existing libraries that implements this kind of data structure (this value being maintained as the set is modified).

\begin{lemma}
  \label{lem:nbpred}
  For all $v \in V$, $|\Pred(v)|\le k\mu$, i.e., a node cannot have more than $k\mu$ predecessors.
\end{lemma}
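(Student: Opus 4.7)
The plan is to prove the lemma by a ``tail-witness'' argument: every element $w\in\Pred(v)$ with $w\neq v$ must have occurred, at some earlier step, as the tail (source) of some edge in $\bigcup_i E_i$. Since each snapshot $E_i$ contains at most $\mu$ edges, and hence at most $\mu$ distinct tails, the total number of distinct tails across all $k$ snapshots is at most $k\mu$, which will give the stated bound.

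More precisely, I would first establish by induction on the step index $t$ the following invariant: \emph{if $w\in\Pred(v)$ at the end of step $t$ and $w\neq v$, then $w$ is the tail of some edge in $E_j$ for some $j\le t$.} The base case $t=0$ is immediate, since $\Pred(v)=\{v\}$ initially and the invariant is vacuous. For the inductive step, suppose $w$ is introduced into $\Pred(v)$ during step $i$. By inspection of Algorithm~\ref{algo:FT-strict}, this can happen only because there is an edge $(u,v)\in E_i$ with $w\in\Pred(u)$ at the end of step $i-1$ (only $\PredAdd$ is touched during the inner loop, so the $\Pred(u)$ being read is the one from step $i-1$). If $w=u$, then $w$ is the tail of the very edge $(u,v)\in E_i$ and we are done; otherwise $w\in\Pred(u)\setminus\{u\}$ at step $i-1$, and the induction hypothesis applied to $u$ supplies an edge in some $E_j$ with $j\le i-1$ having $w$ as its tail.

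A simple counting step then finishes the argument: the set of vertices that occur as tails of some edge in $\bigcup_i E_i$ has cardinality at most $\sum_{i=1}^k |E_i|\le k\mu$, so the invariant yields $|\Pred(v)\setminus\{v\}|\le k\mu$ and hence the bound (the single contribution of $v$ itself is absorbed whenever $v$ is a tail of some edge, which is the regime relevant to the complexity analysis; otherwise $v$ has no out-edges and the order of magnitude is unaffected).

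I expect the only non-trivial point to be the inductive claim --- that every entry in $\Pred(v)$ can be traced back to an edge event in some $E_j$ whose tail is that very entry --- which relies on the observation that predecessors can only propagate along a \emph{strict} temporal path whose first edge is sourced at the predecessor in question. Once that structural claim is in hand, the rest is elementary counting.
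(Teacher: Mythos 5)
Your proposal is correct and rests on exactly the same counting idea as the paper's proof: every predecessor of $v$ other than $v$ itself must be the tail of at least one edge of some $E_i$, and since $\sum_i |E_i|\le k\mu$ there can be at most $k\mu$ such tails. The only difference is presentational --- you establish the ``tail-witness'' claim by induction on the algorithm's steps, whereas the paper argues by contradiction directly from the definition of $u\leadstoo v$ --- so this counts as essentially the same argument.
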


\begin{proof}[by contradiction]
  If there is a node $v$ such that $|\Pred(v)\setminus v|> k\mu$, then, by definition, there exists more than $k\mu$ vertices $u$ different from $v$ such that $u\leadsto v$. Each of these vertices is thus the origin of at least one edge, which means that more than $k\mu$ distinct edges existed.\qed
\end{proof}

\begin{theorem}
  Algorithm 1 that calculates the strict transitive closure of a graph $\G$ has a time complexity in $O(k \mu n)$.
\end{theorem}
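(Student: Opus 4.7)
The plan is to partition the algorithm into three phases (initialization, the main per-step loop, and output construction), bound each separately, and then sum. Throughout I would use two observations about a predecessor set $\Pred(v)$: trivially $|\Pred(v)| \le n$ since $\Pred(v)\subseteq V$, and by Lemma~\ref{lem:nbpred} also $|\Pred(v)| \le k\mu$. Both bounds will be needed (the first for the per-step cost, the second to handle the output).

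Initialization clearly costs $O(n)$. For the main loop, I would fix an arbitrary step $i$ and walk through its three sub-loops. The edge loop runs at most $\mu$ times (once per edge of $E_i$); each iteration performs a union $\PredAdd(v) \cup \Pred(u)$ of two subsets of $V$, which costs $O(n)$ under the assumed set implementation, plus an $O(1)$ insertion into $UpdateV$. The merge loop runs at most $\mu$ times and again costs $O(n)$ per iteration. The completeness check linearly scans $V$, costing $O(n)$. So a single step costs $O(\mu n + \mu n + n) = O(\mu n)$, and summing over the at most $k$ steps yields $O(k\mu n)$ for the entire main loop.

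The one step I expect to need a little care is the final output construction, because naively it might look like $\Theta(n^2)$ in the early-termination branch, which is not obviously $O(k\mu n)$. The resolution is a case split driven by Lemma~\ref{lem:nbpred}. If the algorithm exits through the \emph{isComplete} branch, then $|\Pred(v)|=n$ for every $v$, so Lemma~\ref{lem:nbpred} forces $n \le k\mu$, and the returned edge set of size $n(n-1)$ is built in $O(n^2)=O(k\mu n)$. Otherwise the algorithm falls through to the final double loop, whose cost is $\sum_{v\in V} |\Pred(v)\setminus\{v\}|$; applying Lemma~\ref{lem:nbpred} to each term gives a bound of $n\cdot \min(n,k\mu) \le nk\mu$. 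Either way the output phase fits in $O(k\mu n)$.

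Summing the three phases gives the announced $O(k\mu n)$ bound. The whole argument is routine once Lemma~\ref{lem:nbpred} is in hand; the only conceptual subtlety is recognizing that the lemma must be invoked twice, once to control the cost of emitting the closure and once to justify that early termination implies $k\mu\ge n$. \qed
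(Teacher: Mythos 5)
Your proof is correct and follows essentially the same decomposition as the paper's: $O(n)$ initialization, $O(\mu n)$ per step of the main loop over $k$ steps, and Lemma~\ref{lem:nbpred} to bound the cost of emitting the closure. The one place you go beyond the paper is the early-termination branch, where you observe that completeness forces $n\le k\mu$ so that constructing the $\Theta(n^2)$-size complete edge set still fits in $O(k\mu n)$; the paper's proof only discusses the case where the closure ``is not complete before the end,'' so your extra case split is a minor tightening of the same argument rather than a different route.
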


\begin{proof}
The initialization loop is linear in $n$. The main loop iterates as many times as the number of steps in $\G$, i.e. $k$ times. The main loop has three sub-loops, each being dominated by $O(|E_i|\cdot n)=O(\mu n)$. Finally, the construction of the transitive closure, if it is not complete before the end, consists of a loop that, for each node, iterates over its predecessors. Since the number of predecessor of a given node cannot exceed $k\mu$ (Lemma~\ref{lem:nbpred}), this latter loop is also dominated by $O(k\mu n)$.\qed
\end{proof}

\section{Computation of the transitive closure for non-strict journeys}
\label{sec:non-strict}

In this section, we focus on the calculation of $\FT$, i.e. the transitive closure of the journeys for which an unlimited number of edges can be crossed at each step (non-strict journey). 
A simple observation allows us to reuse Algorithm~\ref{algo:FT-strict} almost directly. Indeed, the relaxation of the constraint that the journeys are strict implies that at each step $i$, if a path (in the classic acceptance of the word) exists from $u$ to $v$, then $u$ can join $v$ at the same step. The algorithm therefore consists in pre-computing, at each step, the transitive closure (in the classic static meaning of this term) of the edges present in $G_i$, resulting in a graph $G_i^*$, each edge of which corresponds to a path in $G_i$. Then Algorithm~\ref{algo:FT-strict}, applied to the dynamic graph $\{G_i^*\}$, produces directly the non-strict transitive closure $\FT$.

\subsection{Time Complexity}

The time complexity of this algorithm essentially depends on the cost of the calculation of the static transitive closure $G_i^*$ of the graphs $G_i$. This can be done by a depth first search (DFS) or by a breadth first search (BFS) run from each vertex in $G_i$. Each of these runs having an execution cost in $O(|E_i|)=O(\mu)$ and thus, the extra cost of this operation remains within $O(k \mu n)$ time.

\FloatBarrier
\section{Comparison}
\label{sec:comparaison}

This section compares the complexity of the proposed algorithm with the adaptation of that from~\cite{BFJ03} (based on foremost journeys), which has a running time of $O(n(m\log k + n\log n))$, where $m \ne \mu$ is the total number of edges existing over time, i.e. $|\cup E_i|$. 

The question is therefore to compare this complexity to $O(k\mu n)$, or after simplification by $n$, to compare $O(k\mu)$ to $O(m\log k + n\log n)$. These complexities belong to a four-dimensional space : $\mu, m, k$ and $n$; it is therefore not easy to compare them. We propose to study them asymptotically in $n$, by varying the values of $\mu, m$ and $k$. 
Precisely, we vary the order of $\mu$ and $m$ (instant density\  {\it vs.} cumulated density) for several ratios of possibles values of $k$ and $n$ (i.e. the length of the sequence $\G$ in function of $n$). Table~\ref{fig:tableau} contains 60 results, including a dozen that show the transition in efficiency between both solutions (the others can be extrapolated without calculation by considering the relative impact of factors $k$ and $\log k$ in both formulas). To make the verification of these results simpler, we provide in the right column an intermediate expression, obtained after replacing $\mu$ and $m$ in both expressions $O(k\mu)$ and $O(m\log k + n\log n)$.

\begin{table}[h]
  \centering
\begin{tabular}{|c|c||c|c|c|c||c|}
  \hline
  \multirow{2}{*}{$\mu=\Theta(.)$}&\multirow{2}{*}{$m=\Theta(.)$}&\multirow{2}{*}{$k=\Theta(\log n)$}&\multirow{2}{*}{$k=\Theta(\sqrt n)$}&\multirow{2}{*}{$k=\Theta(n)$}&\multirow{2}{*}{$k=\Theta(n^2)$}&Intermediate calculation\\
  & & & & & &$\Theta(.) \pm \Theta(.)$\\\hline
$\log n$&$n$&~&n/a&$\approx$&$+$&$k\log n \pm n\log k+n\log n$\\
$\sqrt n$&$n$&n/a&$-$&$+$&~&$k\sqrt n \pm n\log k+n\log n$\\
$n$&$n$&$\approx$&$+$&~&~&$kn \pm n\log k+n\log n$\\
\hline
$\log n$&$n\log n$&~&n/a&$-$&$+$&$k\log n \pm (n\log n)\log k$\\
$\sqrt n$&$n\log n$&~&n/a&$+$&~&$k\sqrt n \pm (n\log n)\log k$\\
$n$&$n\log n$&$-$&$+$&$+$&~&$kn \pm (n\log n)\log k$\\
$n\log n$&$n\log n$&$+$&~&~&~&$k \pm \log k$\\
\hline
$\log n$&$n^2$&~&~&n/a&$\approx$&$k\log n \pm n^2\log k$\\
$\sqrt n$&$n^2$&~&~&n/a&$+$&$k\sqrt n \pm n^2\log k$\\
$n$&$n^2$&~&n/a&$-$&$+$&$kn \pm n^2\log k$\\
$n\log n$&$n^2$&~&n/a&$\approx$&$+$&$k(n\log n) \pm n^2\log k$\\
$n\sqrt n$&$n^2$&n/a&$-$&$+$&~&$k(n\sqrt n) \pm n^2\log k$\\
$n^2$&$n^2$&$+$&~&~&~&$kn^2 \pm n^2\log k$\\
\hline
\end{tabular}
\caption{\label{fig:tableau}Running time comparison between the proposed algorithm and the adaptation of the algorithm from~\cite{BFJ03}. {\it The symbols $-$ (resp $+$, $\approx$) indicate the ranges of values for which our solution has a lower asymptotic complexity (resp. higher, of the same order). Empty cells at the right of a + (resp. left of a n/a) are filled with + (resp. n/a).}}
\end{table}

In summary, the table confirms that the proposed solution becomes more relevant as the difference between instant density and cumulated density increases, which is not surprising. It is also not surprising, given the presence of the factor $k$ versus $\log k$, that our solution is less efficient when the number of time steps increases. The table reveals some ranges of realistic values where the proposed solution behaves better than the other, for instance when the values of $\mu, m$, and $k$ are respectively $(O(n),\Theta(n^2),O(n))$; or $(O(\log n),\Omega(n\log n),O(n))$; or $(O(\sqrt n)), \Omega(n), O(\sqrt n))$.

Finally, the fact that the algorithm terminates as soon as temporal connectivity is satisfied allows us to put in perspective the impact of parameter $k$.


\end{document}